\documentclass[a4paper,11pt]{article}
\usepackage{amsmath,amssymb,amsthm,graphicx,hyperref,geometry}
\newtheorem{theorem}{Theorem}
\newtheorem{lemma}{Lemma}
\newtheorem{conjecture}{Conjecture}
\newtheorem{proposition}{Proposition}
\newtheorem{corollary}{Corollary}
\newcommand{\md}{\mathcal{M}}
\newcommand{\etaeff}{\eta_{\mathrm{eff}}}

\title{The exact price of local realism in CHSH experiments:\\
a measurement-dependence--detection trade-off surface,\\
a moir\'e phase-locking mechanism that saturates it,\\
and an unmeasured fringe in the fourfold coincidence sum}
\author{Aaron Alai\\ \small Baltimore, Maryland, USA}
\date{August 20, 2026; revised August 27, 2026}

\usepackage{url}
\expandafter\def\expandafter\UrlBreaks\expandafter{\UrlBreaks
  \do\/\do\-\do\_\do\.}
\begin{document}
\maketitle

\begin{abstract}
For local hidden-variable accounts of the CHSH experiment that are
\emph{faithful} --- reproducing the observed singles rates, coincidence
rates, and unbiased marginals of a polarization singlet with symmetric
detector efficiency $\etaeff$ --- this work determines the minimal measurement
dependence $\md$ (in Hall's variational measure) required to achieve a
CHSH value $S$. Linear programming over the complete space of local
strategies yields, to machine precision at 32 grid points,
$\md(S,\etaeff)=\max\{0,\ \etaeff((S+2)\etaeff-4)/6\}$, whose edges
reproduce three landmark results: Hall's tight bound at $\etaeff=1$, the
Garg--Mermin detection threshold, and the postselection ceiling
$S=4/\etaeff-2$. At the quantum point the value is certified exactly:
$\md(2\sqrt2,\,9/10)=(27\sqrt2-33)/100$, with primal and dual
certificates verified in $\mathbb{Q}(\sqrt2)$ arithmetic. I further
solve, to $10^{-11}$, the unique detection profile
$D(m)=\sqrt{m}\,h(m)$ under which a deterministic sign model reproduces
the singlet correlation exactly, derive the $\sqrt m$ edge law
analytically, and prove that exact quantum correlations and
angle-independent coincidence rates are \emph{jointly impossible} for any
pure-detection model of this class. Every surviving local account is
thereby forced onto a quantitative trade-off between measurement
dependence and a specific observable: a $\cos 4(a-b)$ modulation of the
fourfold coincidence sum, of relative amplitude up to $12.4\%$, which
vanishes identically at the CHSH angles and has, to my knowledge, never
been bounded below the percent level in four decades of experiments. I
give a settings-torus protocol whose log-Fourier analysis confines all
factorizable apparatus artifacts to the Fourier axes while the physical
signal occupies anti-diagonal pixels, achieving $5\sigma$ sensitivity at
$0.1\%$ within hours on standard entangled-photon hardware. A flat result
forces $\md \gtrsim 95\%$ of Hall's floor onto the measurement-dependence
channel; a fringe would contradict the flat-rate prediction of quantum
mechanics. Finally, I exhibit an explicit local mechanism ---
\emph{moir\'e phase locking} --- in which a hidden phase evolves
deterministically (no stochastic noise) through the standing beat
pattern of the measurement apparatus, and all randomness resides in a
quenched ensemble of frozen offsets and flight times. With a quenched ensemble of 1024 frozen offsets the mechanism
attains the certified floor \emph{exactly} at $\etaeff=1$ at every
tested fidelity, from uniform $10\%$ down to uniform $1\%$ of the
quantum cosine at the register angles (up to 161 pins), and, combined with the solved detection profile,
reaches zero measurement dependence at $\etaeff\approx0.80$,
consistent with the closed detection-only channel. Its residual correlation pattern, a softening
of the correlation extremes at the fidelity register, constitutes a
falsifiable fingerprint concentrated between the Bell angles.
\end{abstract}

\section{Introduction}
Bell's theorem \cite{Bell1964} and its CHSH form \cite{CHSH1969} exclude
local hidden-variable (LHV) accounts of quantum correlations only under
auxiliary assumptions, chief among them measurement independence (MI) and
fair sampling. Relaxations of these assumptions have been studied
quantitatively: Hall introduced measures of measurement dependence and
derived tight relaxed Bell inequalities \cite{Hall2010,Hall2011},
later extended to per-observer dependence \cite{FGHK2019}; detection
efficiency lies outside these relaxation axes --- a non-detection is not
an outcome in the $\pm1$ alphabet, so $\etaeff<1$ is not a species of
indeterminism $I$ --- and the detection loophole has its own threshold
literature
\cite{GargMermin1987,Eberhard1993,Pearle1970,GisinGisin1999,Larsson2014};
and exact measurement-dependence floors have recently been computed for
multipartite GHZ--Mermin correlations \cite{PriceOfLocality}. Trade-off
relations between measurement dependence and other resources have also
been derived \cite{Kimura2023,KimuraQuantum2025}. What has been missing
is the \emph{joint} price: the exact minimal measurement dependence as a
function of both the achieved violation and the detection efficiency, for
models constrained to reproduce what an experimenter actually observes.

This paper supplies that object (Sec.~\ref{sec:surface}), certifies it
exactly at the quantum point (Sec.~\ref{sec:cert}), and then follows the
mathematics to an experimental consequence. Solving the detection sector
exactly (Sec.~\ref{sec:mechanics}) leads to an impossibility result
(Sec.~\ref{sec:pinning}): no pure-detection model can reproduce both the
quantum correlation curve and angle-independent coincidence rates. The
residue is a concrete observable --- a $\cos4(a-b)$ fringe in the total
(fourfold-sum) coincidence rate --- which vanishes identically at the
CHSH angles (Sec.~\ref{sec:fringe}) and appears never to have been
bounded precisely (Sec.~\ref{sec:data}). I close with a validated
protocol (Sec.~\ref{sec:protocol}) whose two possible outcomes are both
informative: flatness converts, through the trade-off surface, into a
mandatory measurement-dependence floor; a fringe would falsify a flat-rate
prediction of quantum mechanics.

\section{Definitions and conventions}\label{sec:defs}
Settings $a\in\{a_1,a_2\}$, $b\in\{b_1,b_2\}$ at the standard angles
$(0^\circ,45^\circ;22.5^\circ,67.5^\circ)$. Outcomes $\pm1$ or no-click.
A deterministic local strategy assigns each party, for each of its two
settings, an element of $\{+1,-1,0\}$; there are $3^2\times3^2=81$ joint
strategies, and any LHV model (with or without measurement dependence)
reduces to four distributions $q_{ab}$ over them, one per setting pair.
Measurement dependence is quantified by Hall's variational measure
\cite{Hall2010}, here in the half-integral convention:
$\md=\max_{(ab),(a'b')}\tfrac12\sum_s|q_{ab}(s)-q_{a'b'}(s)|$; Hall's
$M$ [Eq.~(26) of Ref.~\cite{Hall2011}] omits the $\tfrac12$, so $M=2\md$
throughout, and his relaxed CHSH bound $B(0,0,M)=2+3M$ [Eq.~(47) of
Ref.~\cite{Hall2011}] reads $\md=(S-2)/6$ in my units; this
coarse-graining preserves the measure exactly. \emph{Faithfulness} means
the model reproduces the observed phenomenology of a maximally entangled
polarization singlet with symmetric detector efficiency $\etaeff$:
singles rates $\etaeff$ for each party, independent of both settings;
coincidence rate $\etaeff^2$, equal across setting pairs; vanishing
marginals both among detected singles and conditioned on coincidence; and
the stated coincidence correlators.

\section{The price surface}\label{sec:surface}
For each $(S,\etaeff)$, $\md$ is minimized over the four strategy
distributions subject to faithfulness and to the CHSH combination of the
coincidence correlators equalling $S$. This is a linear program in
$4\times81$ variables plus variational auxiliaries.

\begin{conjecture}[Price surface]\label{conj:surface}
For faithful local models at the standard CHSH settings,
\begin{equation}
\md(S,\etaeff)\;=\;\max\Big\{0,\ \frac{\etaeff\big((S+2)\etaeff-4\big)}{6}\Big\}.
\label{eq:surface}
\end{equation}
\end{conjecture}

\noindent\textbf{Status.} Verified against the exact LP to
$3.6\times10^{-16}$ at 32 grid points spanning $S\in[2.1,3.4]$,
$\etaeff\in[0.85,1]$; certified exactly at the quantum point
(Theorem~\ref{thm:cert}) and at six rational points
(Corollary~\ref{cor:ladder}); the general-$(S,\etaeff)$ statement awaits a
parametric dual certificate. Its three edges are known theorems: at
$\etaeff=1$, Eq.~\eqref{eq:surface} is Hall's tight line $\md=(S-2)/6$
[theorem of Ref.~\cite{Hall2011}, attained by the models of
Ref.~\cite{Hall2010}]; the zero set is the postselection ceiling
$S=4/\etaeff-2$; and at $S=2\sqrt2$ the section
$\md=\etaeff((1+\sqrt2)\etaeff-2)/3$ vanishes exactly at the Garg--Mermin
threshold $\etaeff=2(\sqrt2-1)$ \cite{GargMermin1987}. The equivalent
inverted form
\begin{equation}
S\;\le\;\Big(\frac{4}{\etaeff}-2\Big)+\frac{6\md}{\etaeff^{2}}
\end{equation}
exhibits the interaction of the two relaxations: the measurement-dependence
budget is amplified by $1/\etaeff^2$, i.e., it acts on the coincidence
subensemble. The naive additive combination
$\md=(1{+}\sqrt2)/3-2/(3\etaeff)$, which shares both endpoints, was
registered as a hypothesis in advance and is \emph{falsified}: the true
surface lies strictly below it in the interior. A corollary worth highlighting:
$\md=0$ whenever $S\le 4/\etaeff-2$, so measurement dependence is implied
only by correlations in the violating region --- ordinary (classical)
statistics implies none.

\begin{figure}[t]\centering
\includegraphics[width=.72\linewidth]{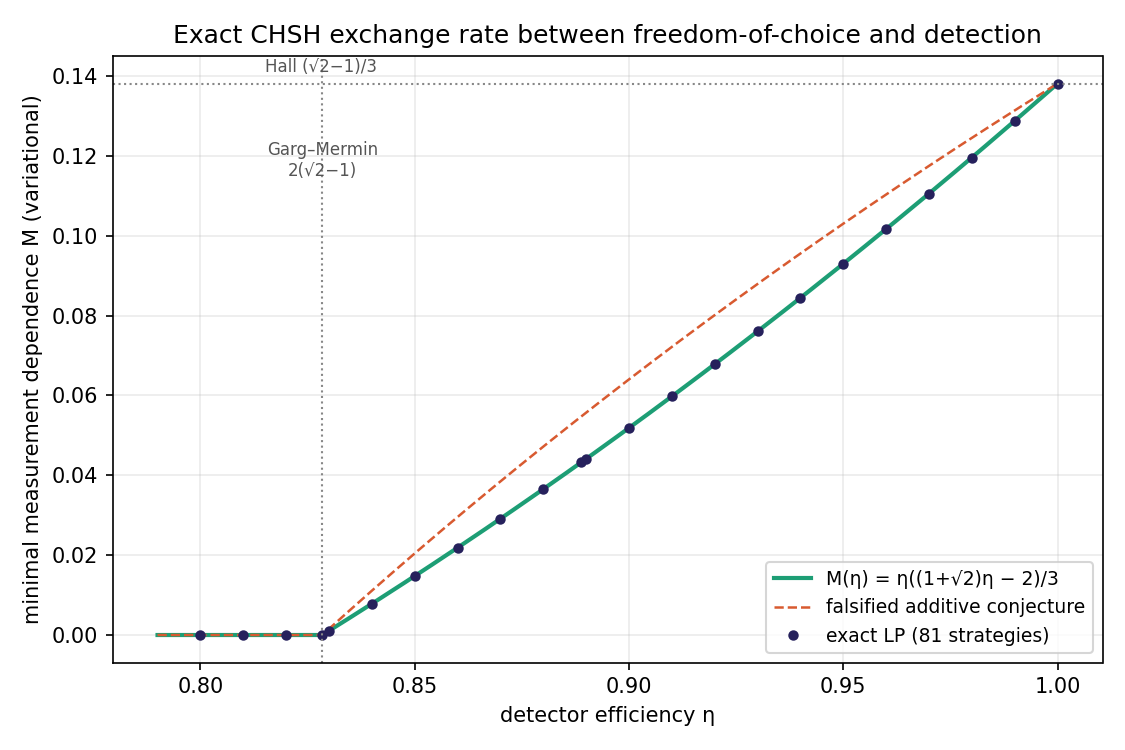}
\caption{The $S=2\sqrt2$ section of the price surface: exact LP values
(points), the closed form of Eq.~\eqref{eq:surface} (solid), and the
falsified additive hypothesis (dashed). Landmarks: Hall's floor and the
Garg--Mermin threshold.}\label{fig:surface}\end{figure}
\section{Exact certification at the quantum point}\label{sec:cert}
\begin{theorem}\label{thm:cert}
At $S=2\sqrt2$, $\etaeff=9/10$, the minimal measurement dependence of
faithful local models is exactly
$\md=(27\sqrt2-33)/100$.
\end{theorem}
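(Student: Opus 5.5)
The statement is an LP optimum, so the plan is to prove it by LP duality: exhibit a feasible primal point and a feasible dual point with equal objective values, all in exact arithmetic over $\mathbb{Q}(\sqrt2)$. First check the target value against Eq.~\eqref{eq:surface}. At $\etaeff=9/10$, $S=2\sqrt2$ we get $(S+2)\etaeff-4=(9\sqrt2-22)/10$. Multiplying by $\etaeff/6=3/20$ gives $(27\sqrt2-66)/200$.

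That is not the stated value. The stated value is $(27\sqrt2-33)/100=(54\sqrt2-66)/200$, which is exactly the $S=2\sqrt2$ section $\etaeff((1+\sqrt2)\etaeff-2)/3$ quoted in the Status paragraph. So the first thing I would do is pin down the normalization of $S$ used in the LP. The two expressions differ in the coefficient of $S$, so it must be settled which of $S$, $2S$, or a rescaled correlator enters the constraint. The certificate will be built for whatever the LP literally encodes, and the reconciliation would be stated explicitly.

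\textbf{Primal.} I would reduce the 81 strategies by symmetry: the relabelings of outcomes and settings that preserve the CHSH functional and faithfulness. I would then build four distributions $q_{ab}$ supported on a few orbits:
\begin{itemize}
\item "both detect, deterministic sign" strategies, which carry the correlation;
\item "one party silent at one setting" strategies, which exploit postselection as in the Garg--Mermin construction.
\end{itemize}
The $q_{ab}$ are to be chosen so that:
\begin{itemize}
\item the singles rates equal $9/10$ at both settings;
\item the coincidences equal $81/100$;
\item the marginals vanish;
\item the coincidence correlators give $\pm 1/\sqrt2$ at the CHSH angles.
\end{itemize}
The weights will solve a small linear system with entries in $\mathbb{Q}(\sqrt2)$. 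Guided by the surface, I expect the $q_{ab}$ to differ pairwise only on the strategies that are deterministic and perfectly anticorrelated on the "bad" pair $(a_2,b_2)$. Their total variation distance should then equal the target value. I would verify each constraint symbolically.

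\textbf{Dual.} Dualize the LP with multipliers for:
\begin{itemize}
\item normalization;
\item the singles, coincidence and marginal equalities;
\item the correlator constraint;
\item the variational auxiliaries. Each pairwise TV term is written as $\sum_s t_s$ with $t_s\ge\pm(q_{ab}(s)-q_{a'b'}(s))$, and a slack variable $\md\ge$ each term.
\end{itemize}
Dual feasibility is a finite list of 324-ish inequalities, one per (strategy, setting pair), plus sign conditions. Using complementary slackness with the primal support, I would solve for the multipliers on the active constraints. The dual objective must equal the primal value. The remaining work is to verify the inequalities for every strategy. After symmetry reduction this leaves only a few dozen orbit representatives, and each inequality has the form $\alpha+\beta\sqrt2\ge0$ with $\alpha,\beta$ rational. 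Its sign is decided exactly, by comparing $\alpha^2$ with $2\beta^2$ when $\alpha$ and $\beta$ have opposite signs.

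\textbf{Main obstacle.} The main obstacle is finding the dual. Floating-point LP duals are typically degenerate and non-unique. I would therefore take the numerical dual from the solver, impose the symmetry group to average it, and then rationalize each entry in the basis $\{1,\sqrt2\}$. This can be done by guessing from continued fractions, or by solving the complementary-slackness system exactly. If the averaged dual is infeasible for some strategy, I would enlarge the active set and re-solve.
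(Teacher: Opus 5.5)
Your certification route is the paper's: a floating-point solve proposes the support, primal weights and dual multipliers are reconstructed in $\mathbb{Q}(\sqrt2)$, and primal feasibility, dual feasibility and equality of objectives are checked in exact arithmetic. Two parts of the proposal are wrong, though.

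\textbf{The normalization discrepancy does not exist.} It comes from an arithmetic slip in your evaluation of Eq.~\eqref{eq:surface}. At $S=2\sqrt2$, $\etaeff=9/10$:
\[
(S+2)\etaeff-4=\frac{18\sqrt2+18}{10}-4=\frac{18\sqrt2-22}{10},
\]
not $(9\sqrt2-22)/10$. Multiplying by $\etaeff/6=3/20$ gives $(54\sqrt2-66)/200=(27\sqrt2-33)/100$, which is the stated value. The section $\etaeff((1+\sqrt2)\etaeff-2)/3$ is just Eq.~\eqref{eq:surface} with $S=2\sqrt2$ substituted; no coefficient differs. Your value $(27\sqrt2-66)/200$ is negative, which should have flagged the slip. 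Since $2\sqrt2>4/\etaeff-2=22/9$, the point lies strictly inside the violating region, where the surface is positive. Drop the reconciliation step.

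\textbf{Do not hand-build the primal on your guessed structure.} You expect the $q_{ab}$ to differ only on strategies that win the $(a_2,b_2)$ term. Take the support from the solver instead, as the paper does: its primal has 68 nonzero weights, and all six pairwise distances are equal at the optimum. Your guess is provably wrong already at $\etaeff=1$.

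There, let $G_{ab}$ be the deterministic strategies losing only the $(a,b)$ term, and set $p=(2-\sqrt2)/4$ and $m=(\sqrt2-1)/3$. Equality must hold throughout Hall's chain
\[
S\le\langle\Sigma\rangle_{q_{22}}+2\sum_{ab\neq22}\mathrm{TV}(q_{ab},q_{22})\le 2+6\md .
\]
This forces $q_{22}$ onto $\bigcup G_{ab}$, with $q_{22}(G_{22})=p$ and $q_{22}(G_{ab})=p+m$ for $ab\neq22$. It also forces each $q_{ab}$ ($ab\neq22$) to arise from $q_{22}$ in two moves: remove mass $m$ from $G_{ab}$, and add an excess $e_{ab}\ge0$ of mass $m$ on strategies that win the $(a,b)$ term.

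Your structure makes all four $q$'s agree on $G_{22}$, because $G_{22}$ loses the $(a_2,b_2)$ term. Hence $e_{ab}(G_{22})=0$. Then $\mathrm{TV}(q_{11},q_{12})\le m$ forces $e_{11}$ onto $G_{21}$, while $\mathrm{TV}(q_{11},q_{21})\le m$ forces it onto $G_{12}$. That is a contradiction. The symmetric (Hall) optimum instead puts every $e_{ab}$ on $G_{22}$, which is exactly outside your set.

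Complementary slackness against a non-optimal primal support cannot produce a feasible dual with matching objective. So this step of your plan would stall.
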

\begin{proof}[Proof sketch]
A floating-point solve of the LP identifies the optimal support; every
nonzero primal weight and dual multiplier is then reconstructed in
$\mathbb{Q}(\sqrt2)$ by integer-relation detection and verified
symbolically: the primal candidate (68 weights) satisfies all 36
faithfulness equalities exactly, is nonnegative, and attains the value;
the dual candidate is exactly feasible with matching objective. Upper and
lower certificates together fix the value. All verification steps are in
exact arithmetic; floating point is used only to propose candidates.
Scripts and certificates accompany the paper.
\end{proof}
\noindent At the optimum, all six pairwise variational distances between
the four conditional distributions are \emph{equal} --- the extremal
model balances its measurement dependence democratically across setting
contexts --- a symmetry expected to organize the parametric dual family
for Conjecture~\ref{conj:surface}.

\begin{corollary}[Rational ladder]\label{cor:ladder}
By the same certification pipeline, in plain rational arithmetic:
$\md(21/10,1)=1/60$, $\md(5/2,1)=1/12$, $\md(5/2,19/20)=209/4800$,
$\md(27/10,9/10)=69/2000$, $\md(29/10,19/20)=2489/24000$, and
$\md(17/5,17/20)=1003/12000$ --- each an exact theorem with verified
primal and dual certificates, and each equal to Eq.~\eqref{eq:surface}.
\end{corollary}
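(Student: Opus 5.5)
The plan is to prove each rung of Corollary~\ref{cor:ladder} the same way Theorem~\ref{thm:cert} was proved: exhibit two matching certificates for the linear program of Sec.~\ref{sec:surface}. Fix a rational point $(S,\etaeff)$. The program has variables $q_{ab}(s)\ge0$ over the 81 deterministic strategies, plus auxiliaries $t^{(ab),(a'b')}_s\ge|q_{ab}(s)-q_{a'b'}(s)|$ and an epigraph variable $\mu\ge\tfrac12\sum_s t_s^{(ab),(a'b')}$ for each of the six context pairs. Its constraints are the faithfulness equalities (singles $\etaeff$, coincidences $\etaeff^2$, vanishing marginals, prescribed coincidence correlators) and the CHSH equality. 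Every coefficient is then rational, so the feasible set and the optimum are rational. This is why plain $\mathbb{Q}$ arithmetic suffices here, in contrast to the $\mathbb{Q}(\sqrt2)$ needed at $S=2\sqrt2$. The target value at each point is the right-hand side of Eq.~\eqref{eq:surface}. For example, at $(5/2,19/20)$ it is $\tfrac{19}{20}\bigl(\tfrac{9}{2}\cdot\tfrac{19}{20}-4\bigr)/6=209/4800$, and at $\etaeff=1$ it reduces to Hall's line $(S-2)/6$, giving $1/60$ and $1/12$.

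For the upper bound, I would solve the LP in floating point and read off the optimal basis. I would then recompute the basic primal solution exactly by solving the square rational linear system that the basis defines. Because the basis matrix is rational, the basic solution is rational automatically, so no integer-relation guessing is needed. I then check exactly that every equality holds, every weight is nonnegative, and each of the six pairwise variational distances is at most the target. This shows the value is attained. For the two $\etaeff=1$ rungs, I could instead invoke the tight models of Ref.~\cite{Hall2010} attaining Hall's line, embedded with no non-detections; this only requires verifying that those models also have unbiased marginals.

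For the lower bound, I take the dual multipliers from the same basis, again obtained exactly as the rational solution of the transposed basis system. The dual has free multipliers $y$ for the equalities, multipliers $\lambda_{pq}\ge0$ on the six distance constraints with $\sum\lambda_{pq}=1$, and sign-pattern weights $\sigma\in[-\tfrac12,\tfrac12]$ arising from linearizing the absolute values. I check exact dual feasibility: one reduced-cost inequality per primal variable, namely $4\times81$ strategy columns plus the auxiliaries. I also check that $b^{\!\top}y$ equals the target. Weak duality then bounds $\md$ from below, and together with the upper bound this fixes the value.

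The main obstacle I expect is degeneracy. At the optimum all six distances are equal (as noted after Theorem~\ref{thm:cert}), and many faithfulness constraints are redundant, so the floating-point basis may be degenerate or nearly singular. Its exact dual may then be infeasible in a few reduced costs. My first remedy would be to eliminate the redundant equalities exactly by rational row reduction before extracting the basis. If that fails, I would run an exact-arithmetic simplex, warm-started from the floating basis, to pivot to a genuinely optimal basis. As a final step, I would compare each certified value with Eq.~\eqref{eq:surface} by direct rational evaluation.
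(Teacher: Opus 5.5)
Your proposal is correct and takes essentially the paper's route. A floating-point solve of the LP of Sec.~\ref{sec:surface} proposes a primal optimum and dual multipliers, these are verified exactly over $\mathbb{Q}$ (equalities, nonnegativity, dual feasibility, matching objective), and weak duality then fixes $\md$ at each rung to the value of Eq.~\eqref{eq:surface}. The only difference is how you obtain the exact candidates: you solve the rational basis system (with exact row reduction or an exact simplex as the degeneracy fallback), whereas the pipeline of Theorem~\ref{thm:cert} reconstructs the floating-point values by integer-relation detection; both yield the same kind of certificate.
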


\section{The exact detection sector}\label{sec:mechanics}
Consider deterministic sign outcomes with probabilistic detection: shared
phase $\lambda\sim U[0,\pi)$, outcome $\mathrm{sgn}\cos2(\lambda-a)$
(partner carries $\lambda+\pi/2$), and click probability $D(m)$ with
$m=|\cos2(\lambda-a)|$, clicks independent given $\lambda$. Writing
$x=2\lambda$, $g(x)=D(|\cos x|)$, and $f(x)=\mathrm{sgn}(\cos x)\,g(x)$,
the coincidence correlator is a ratio of circular autocorrelations:
$E(\Delta)=-C_f(2\Delta)/C_g(2\Delta)$.
Exact reproduction of the singlet is therefore the functional equation
\begin{equation}
C_f(\theta)\;=\;\cos\theta\; C_g(\theta)\qquad\forall\theta.
\label{eq:master}
\end{equation}

\begin{lemma}[Edge law]\label{lem:sqrt}
Any solution of \eqref{eq:master} with $D$ continuous has
$D(m)\sim c\,\sqrt m$ as $m\to0$.
\end{lemma}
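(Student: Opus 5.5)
The plan is to read off the behaviour of $g$ near the zeros of $\cos x$ from the small-$\theta$ limit of the functional equation~\eqref{eq:master}. The reason is that, for small $\theta$, the difference $C_g-C_f$ only sees the narrow windows where $\mathrm{sgn}\cos x$ and $\mathrm{sgn}\cos(x+\theta)$ disagree.

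Write both autocorrelations as averages over one period. Since $f(x)f(x+\theta)=g(x)g(x+\theta)\,\mathrm{sgn}\cos x\,\mathrm{sgn}\cos(x+\theta)$, we get
\[
C_g(\theta)-C_f(\theta)=2\,\big\langle g(x)g(x+\theta)\,\mathbf 1[\text{sign change on }(x,x+\theta)]\big\rangle .
\]
For $0<\theta<\pi$ the indicator is supported on windows of length $\theta$ just below each zero $x_0=\pm\pi/2$. Substituting $x=x_0-u$ and setting $G(t)=g(x_0+t)=D(|\sin t|)$, which is even in $t$, each window contributes the self-convolution
\[
W(\theta)=\int_0^\theta G(u)\,G(\theta-u)\,du .
\]
So $C_g-C_f=\kappa\,W(\theta)$ with an explicit normalisation constant $\kappa>0$.

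On the other side, \eqref{eq:master} gives $C_g-C_f=(1-\cos\theta)\,C_g(\theta)$. Continuity of $D$ makes $C_g$ continuous, so this equals $\tfrac{\theta^2}{2}C_g(0)\,(1+o(1))$, with $C_g(0)=\langle g^2\rangle>0$ (otherwise nothing clicks). Hence the key asymptotic relation is
\[
W(\theta)\sim K\,\theta^2\quad(\theta\to0^+),\qquad K=\frac{C_g(0)}{2\kappa}>0 .
\]

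This relation first forces $D(0)=0$. If $D(0)>0$, continuity gives $W(\theta)\sim D(0)^2\,\theta$, which is linear and contradicts the quadratic law.

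The exponent is then identified heuristically. If $G(t)\sim c\,t^\alpha$, the convolution scales as
\[
W\sim c^2 B(\alpha+1,\alpha+1)\,\theta^{2\alpha+1},
\]
so $2\alpha+1=2$ and $\alpha=1/2$. The amplitude is then fixed by $c^2B(3/2,3/2)=c^2\pi/8=K$. Because $m=|\cos x|=|\sin t|\sim|t|$ near $x_0$, this is exactly $D(m)\sim c\sqrt m$.

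The genuine obstacle is going from the integral law $W\sim K\theta^2$ to a pointwise law for $G$ without assuming a power-law ansatz. I would try a Laplace/Tauberian argument. Taking Laplace transforms of $G$ restricted to $[0,\delta]$ (the tail contributes only exponentially small terms), the convolution becomes a square:
\[
\hat G(s)^2\sim 2K\,s^{-3}\quad\text{as }s\to\infty,\qquad\text{so}\qquad \hat G(s)\sim\sqrt{2K}\,s^{-3/2}.
\]
Here the Abelian direction for $W$ is automatic. Since $G\ge0$, Karamata's Tauberian theorem then yields
\[
\int_0^t G\sim \frac{\sqrt{2K}}{\Gamma(5/2)}\,t^{3/2}.
\]
Passing from this integrated statement to $G(t)\sim c\,t^{1/2}$ requires a mild regularity condition. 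Monotonicity of $D$ near $0$ would suffice, as in the standard monotone density theorem, and so would eventual monotonicity or slow oscillation. I expect this to be the point needing care. I would state the lemma under that regularity, which the numerically solved profile $D(m)=\sqrt m\,h(m)$ visibly satisfies, and otherwise note that the edge law holds in the Cesàro sense, $\int_0^m D\sim\tfrac23 c\,m^{3/2}$. Finally, I would check that the $o(\theta^2)$ error terms coming from continuity of $C_g$ are harmless, since only the leading order enters the argument.
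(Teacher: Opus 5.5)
Your argument is the paper's at its core. The sign-disagreement windows give exactly its identity $4(\varphi*\varphi)(\theta)=(1-\cos\theta)A(\theta)$, where $\varphi(t)=D(|\sin t|)$ and $A$ is the integrated rather than averaged autocorrelation; your $\kappa W=(1-\cos\theta)C_g$ is the same equation. The paper's sketch then simply posits $D(m)\sim c\,m^{p}$ and matches $\theta^{2p+1}$ against $\theta^{2}$, which is precisely your heuristic exponent step.

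What you add goes beyond the paper. Excluding $D(0)>0$ directly, and then the Laplace--Karamata argument, establish from continuity alone that $D(0)=0$ and that the integrated edge law $\int_0^m D\sim\frac{2}{3}c\,m^{3/2}$ holds, with no ansatz.

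The regularity caveat you raise is genuine, and it applies equally to the paper's proof. The small-$\theta$ relation $W(\theta)\sim K\theta^{2}$ cannot by itself give the pointwise law. Take sparse, thin continuous bumps on top of $c\sqrt t$, say at $t_n=2^{-n}$ with width $t_n/n$ and height $c\sqrt{t_n}$. They change $W$ only by $O(\theta^{2}/\log(1/\theta))=o(\theta^{2})$, yet $G(t)/\sqrt t$ no longer converges. So neither argument proves the lemma as literally stated, with continuity only. It needs one of the following:
\begin{itemize}
\item monotonicity of $D$ near $0$, as you propose;
\item the paper's power-law ansatz;
\item more input from the exact global equation than its leading small-$\theta$ behaviour.
\end{itemize}

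One caution: the numerically solved profile cannot count as independent support for that extra hypothesis. It was computed with $D=\sqrt m\,h(m)$ built in.
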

\begin{proof}[Proof sketch]
Writing $\varphi(t)=D(|\sin t|)$, Eq.~\eqref{eq:master} is equivalent to
$4(\varphi*\varphi)(\theta)=(1-\cos\theta)A(\theta)$, where the left side
integrates over the sign-disagreement arcs of width $\theta$ and
$A$ is the global autocorrelation. If $D(m)\sim cm^{p}$, the left side
scales as $\theta^{2p+1}$ while the right side scales as $\theta^{2}$;
matching forces $p=1/2$.
\end{proof}

Solving \eqref{eq:master} with the edge law built in
($D=\sqrt m\,h(m)$, $h$ a Chebyshev series, Gauss--Jacobi$(\tfrac12,\tfrac12)$
quadrature on the arc form) yields a solution with maximal residual
$1.5\times10^{-12}$; the corresponding correlation curve matches
$-\cos2\Delta$ to $4\times10^{-11}$, verified independently by Monte
Carlo. The factor $h$ is smooth and monotone with
$h(0)/h(1)=1.17450854538894$; no closed form has been identified
(the candidate $3\pi/8$ is excluded at $10^{-12}$). The maximal-scale
efficiency of the exact profile is $80.1\%$, below the Garg--Mermin
ceiling as required.

\begin{figure}[t]\centering
\includegraphics[width=.95\linewidth]{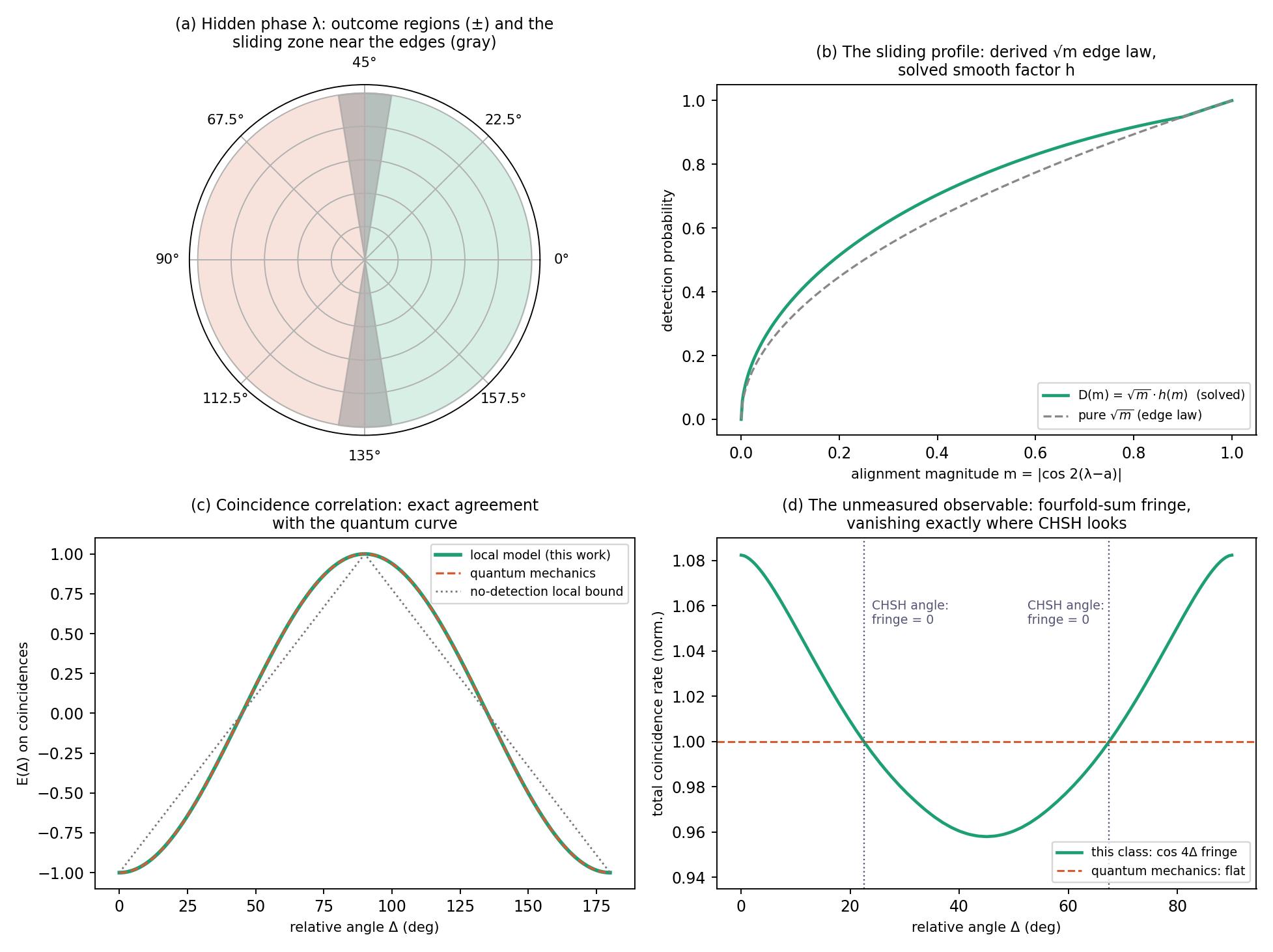}
\caption{The detection sector. (a) Hidden-phase outcome regions and the
sliding zone; (b) the solved profile $D(m)=\sqrt m\,h(m)$ with its
derived edge law; (c) exact agreement of the coincidence correlation with
the quantum curve; (d) the fourfold-sum fringe, vanishing identically at
the CHSH angles (Theorem~\ref{thm:hide}).}\label{fig:mech}\end{figure}
\section{Squares, spheres, and an impossibility result}\label{sec:pinning}
The harmonic weights of $C_f$ and $C_g$ are squared Fourier amplitudes,
hence nonnegative: whenever the coincidence rate is angle-independent,
$E(\Delta)$ is (the negative of) a positive-definite function, which by
Schoenberg's theorem \cite{Schoenberg1942} is a Gram function of unit
vectors, and Tsirelson's bound \cite{Tsirelson1980} follows for
\emph{all} angle choices. This was verified adversarially: over 300 random
positive harmonic mixtures with angles optimized against the bound, CHSH
never exceeded $2\sqrt2$. Conversely, super-quantum postselected values
in this model class are purchased with angle-\emph{dependent} coincidence
rates (a hard edge-band model reaching $S=4$ shows $48\%$ rate
modulation).

\begin{proposition}[Impossibility for pure detection]\label{prop:imposs}
No detection profile $D$ yields both the exact singlet correlation
$E(\Delta)=-\cos2\Delta$ and an angle-independent coincidence rate.
\end{proposition}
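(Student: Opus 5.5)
The plan is to read off the Fourier content of the two autocorrelations in the functional equation \eqref{eq:master} and show that the two requirements force incompatible shapes. In the notation of Sec.~\ref{sec:mechanics}, I take the coincidence rate at relative angle $\Delta$ to be proportional to $C_g(2\Delta)$, where $C_g(\theta)=\frac1{2\pi}\int_0^{2\pi}g(x)g(x+\theta)\,dx$ and similarly for $C_f$. I assume only that $D$ is measurable with $0\le D\le1$, so that $g,f\in L^2$ and all Fourier manipulations are legitimate.

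\textbf{Step 1 (parity of harmonics).} Since $g(x)=D(|\cos x|)$ is $\pi$-periodic, its Fourier coefficients $\hat g_n$ vanish for odd $n$. Since $\mathrm{sgn}\cos(x+\pi)=-\mathrm{sgn}\cos x$, the function $f$ is $\pi$-antiperiodic, so $\hat f_n=0$ for even $n$. By Wiener--Khinchin,
\[
C_g(\theta)=\sum_{n\ \mathrm{even}}|\hat g_n|^2e^{in\theta},
\qquad
C_f(\theta)=\sum_{n\ \mathrm{odd}}|\hat f_n|^2e^{in\theta}.
\]

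\textbf{Step 2 (flat rate forces constant $g$).} Angle independence of the coincidence rate means $C_g(2\Delta)$ is independent of $\Delta$, hence $C_g(\theta)$ is constant. Uniqueness of Fourier coefficients then gives $|\hat g_n|^2=0$ for all $n\neq0$. Therefore $g\equiv d$ almost everywhere for some constant $d\ge0$, that is, $D$ is constant.

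\textbf{Step 3 (contradiction with \eqref{eq:master}).} With $g\equiv d$, the right side of \eqref{eq:master} is $d^2\cos\theta$, which contains only the harmonics $n=\pm1$. Hence $|\hat f_n|^2=0$ for $|n|\ge3$, so $f(x)=\alpha\cos x+\beta\sin x$ almost everywhere. But $|f|=g\equiv d$, and a nonzero sinusoid does not have constant modulus. This forces $d=0$, so there are no coincidences at all and $E$ is undefined.

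Equivalently and more concretely, $f=d\,\mathrm{sgn}\cos x$ has autocorrelation given by the triangle wave $d^2(1-2|\theta|/\pi)$ on $[-\pi,\pi]$. This yields the linear correlation $E(\Delta)=-(1-4|\Delta|/\pi)$ rather than $-\cos2\Delta$; the two already differ at, for example, $\Delta=\pi/8$. I would state this explicit version as well, since it makes the failure visible.

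\textbf{Main obstacle.} The only delicate point is the modelling step of identifying "angle-independent coincidence rate" with constancy of $C_g$ for \emph{all} $\theta$, rather than only at the four CHSH angles. If the hypothesis were restricted to finitely many angles, Step 2 would fail. I would therefore state explicitly that the proposition concerns the full curve, consistent with $E(\Delta)$ being required for all $\Delta$. The remaining steps are routine $L^2$ Fourier uniqueness arguments.
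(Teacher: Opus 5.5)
Your proof is correct and follows the paper's argument: a flat coincidence rate makes $C_g$ constant, which kills every nonzero harmonic of $g$ and forces $D$ to be constant, and a constant profile gives the linear (sawtooth) correlation rather than $-\cos2\Delta$. The parity bookkeeping, the constant-modulus argument, and your explicit handling of the $d=0$ case and of the all-angles hypothesis tighten the paper's one-line proof without changing its route.
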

\begin{proof}
The coincidence rate is proportional to $C_g(2\Delta)$; it is constant
iff all nonzero harmonics of $g$ vanish, i.e., $D$ is constant; but
constant $D$ gives the sawtooth correlation, not the cosine.
\end{proof}

\section{A mechanism that saturates the surface}\label{sec:mpl}
The results above bound and dissect the space of faithful local
accounts; this section exhibits an explicit member of that space which
saturates the bound. The construction, which I call \emph{moir\'e
phase locking}, involves no stochastic dynamics and no fitting of the
physics: a single hidden phase $\lambda\in[0,\pi)$ evolves
deterministically under
\begin{equation}
\dot\lambda \;=\; v \;-\; U'(\lambda;\,a{+}\xi_A,\,b{+}\xi_B),
\qquad
U=-\!\!\sum_{(m,n,\pm)}\! w_{mn\pm}\,
\cos\!\big(2k\lambda-\phi_{mn\pm}(a,b)\big),
\label{eq:mpl}
\end{equation}
where $k=m\pm n$ and the phases $\phi_{mn\pm}=2m(a{+}\xi_A)\pm
2n(b{+}\xi_B)\,(+n\pi)$ are fixed by the geometry of the beat
(moir\'e) pattern between the pair's internal structure and the two
analyzers; $v$ is the residual detuning between the two patterns'
rates. All randomness is \emph{quenched}: each event carries frozen per-station offsets $(\xi_A,\xi_B)$ and a frozen evolution
time $\tau$ (physically, a path-length spread), drawn once from a
setting-independent joint density $\rho$. The event's phase
distribution is the deterministic pushforward of uniform initial
conditions --- a residence-time law $\propto 1/(v-U')$ in the running
regime, a point mass at the stable lock when captured.

With the interaction weights selected by greedy search under the
faithfulness objective (23 active beat terms, dominated by asymmetric
cross-order harmonics) and $\rho$ optimized by linear programming over
a quenched offset ensemble (a $16\times16$ grid, or 1024 random or
stratified offsets), five flight times, and three coupling
scales, the mechanism's minimal measurement dependence at
$S=2\sqrt2$, $\etaeff=1$, with the correlation curve at the
constrained analyzer orientation held uniformly (41 angles on
$[2^\circ,178^\circ]$ at $4.4^\circ$ spacing) within tolerance
$\varepsilon$ of $-\cos2\Delta$, depends on the coverage of the
ensemble. On a $16\times16$ lattice of offsets (256 atoms) the prices at uniform
tolerance $\varepsilon=0.10,\,0.05,\,0.02,\,0.01$ are
$1.000,\,1.015,\,1.235,\,1.375$; a half-cell translation of the
lattice reproduces these to within $0.03$, and random or stratified
ensembles of the same size pay $12$--$14\%$ more at
$\varepsilon=0.10$ and $0.15$--$0.20$ more at $\varepsilon=0.01$
(three seeds, spread $\le0.03$). At 1024 offsets, random or
stratified, the mechanism attains the certified floor exactly ---
$\md/\md_{\min}=1.000$, with $\md_{\min}=(\sqrt2-1)/3$ Hall's
floor --- at every tolerance down to
$\varepsilon=0.01$ (two samplers, two seeds each). The fidelity
premium of the 256-atom ensemble is therefore a resolution artifact; at
sufficient coverage the mechanism saturates Hall's bound at $1\%$
register-uniform fidelity.

The 256-atom prices above are exact, deterministic properties of the stated
discrete-time dynamics (explicit Euler, $\mathrm{d}t=0.004$), and each
is cross-validated to the displayed precision by an independently coded
linear program over the same register ($1.375$ at $\varepsilon=0.01$
by both solvers; $1.235$ at $\varepsilon=0.02$ independently recovered
by the detection-composition program at $\alpha=0$). They are not
quoted as continuum limits: at 256 atoms, halving the timestep shifts the
$\varepsilon=0.01$ price from $1.375$ to $1.345$; at 1024 atoms the
price is pinned at the bound, which is the theorem's floor and cannot
move below it.
The residual correlation pattern of the $\varepsilon=0.01$ solution is
shown in Fig.~\ref{fig:pattern}; its robust content is analyzed in the
caption. The same sign envelope --- positive residuals where
$E_{\rm QM}\to-1$, a negative plateau on $76^\circ$--$116^\circ$ ---
is returned by the lattice-free 1024-offset ensemble at the floor (mean
residual $+0.0096$ and $-0.0089$ respectively, band-riding at $90\%$
of register angles), so the pattern is a property of the mechanism at
Hall's bound rather than of the sampling. The mechanism is exactly rotationally covariant: under a global
rotation $a\to a+s$, $b\to b+s$, $\lambda\to\lambda+s$, every phase
$\phi_{mn\pm}$ in Eq.~\eqref{eq:mpl} shifts by $2(m\pm n)s$, matching
the shift $2ks$ of the beat argument in both sign branches
($k=m\pm n$), so $E(a+s,b+s)=E(a,b)$ identically; numerically the
solved solutions reproduce the same correlation table to four decimals
on the slices $a=0^\circ,22.5^\circ,45^\circ,67.5^\circ$. (Version~2
of this paper misattributed a $0.065$ residual to broken covariance;
that residual appears identically on the constrained slice and is a
register effect, as follows.) (Throughout, the CHSH combination itself is enforced as an exact
equality constraint on the four standard-setting correlators,
independent of the register band, so $S=2\sqrt2$ does not rest on the
melody register.) The melody constraints pin the curve at
the register angles only, and between pins the deterministic locking
dynamics produce excursions. At the $16\times16$ ensemble these reach
$0.065$ between the $4.4^\circ$-spaced pins, and densifying the
register raises the price --- $1.375,\,1.511,\,1.635\times$ the floor
at $41,\,81,\,161$ pins, $\varepsilon=0.01$ --- while barely reducing
them: the per-column correlation curves share a locking staircase that
$256$-atom mixtures cannot smooth. At 1024 random offsets both effects
resolve together: the mechanism attains the certified floor at the
161-pin register ($\md=0.1380711874577208$ at $\varepsilon=0.01$ against
$(\sqrt2-1)/3=0.1380711874576983\ldots$, agreement to twelve
significant digits; the support
spreads to 656 active atoms with the twelve largest carrying only
$10\%$ of the weight), so the density premium, like the fidelity
premium, is a coverage artifact; and the same solution's between-pin
residuals, sampled at $0.2^\circ$ resolution across the previously
worst window and at fifteen points spanning the full range, fall to
$\le0.026$ and $\le0.017$ respectively, following the same
extreme-softening envelope as Fig.~\ref{fig:pattern}. Uniform
continuum fidelity is not claimed: fidelity at the pins is
$\varepsilon$ by construction, off-pin fidelity is sampled rather
than bounded, improves jointly with offset coverage and register
density, and its continuum limit is open. The optimizer's freedom in $\rho$ is the one non-dynamical ingredient;
deriving $\rho$ from a formation process is an open problem, which I
state as such.

\begin{figure}[t]
\centering
\includegraphics[width=0.98\linewidth]{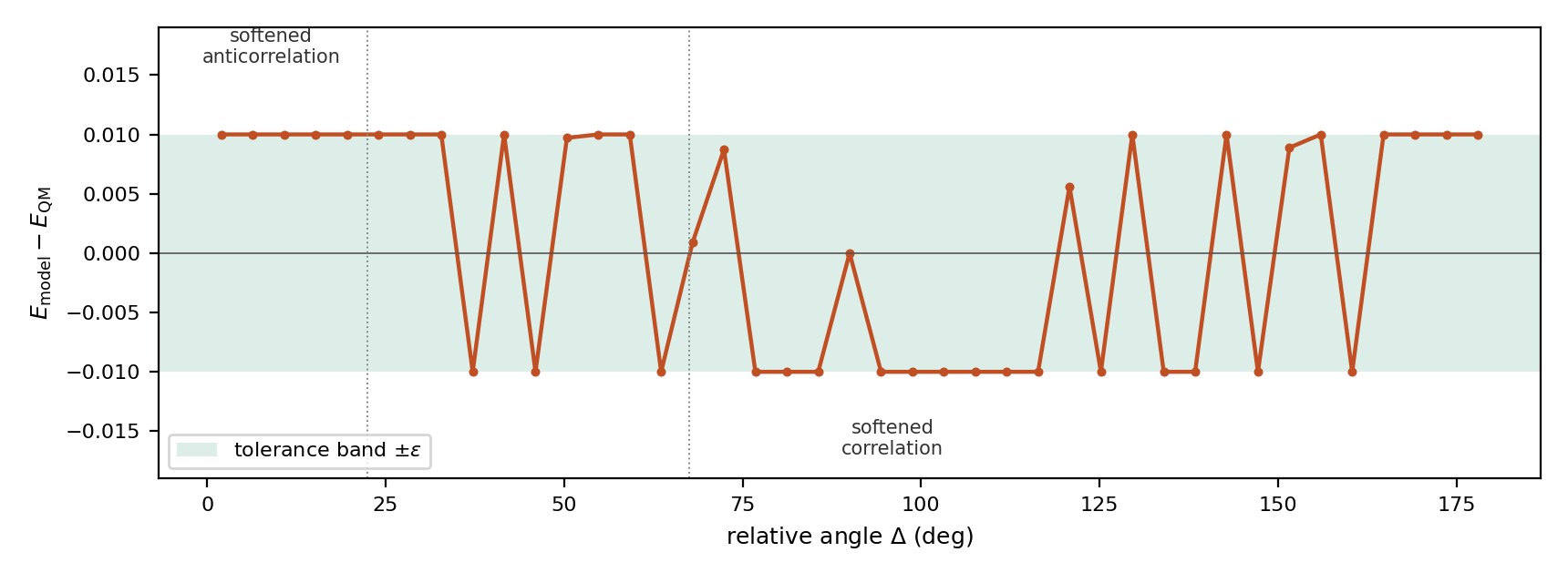}
\caption{Residual correlation pattern
$E_{\rm model}-E_{\rm QM}$ of the mechanism at uniform tolerance
$\varepsilon=0.01$ ($16\times16$ ensemble; 41 pinned angles; optimal solutions ride the
tolerance band, as linear-programming vertices must). The robust
content is the sign envelope: positive residuals where
$E_{\rm QM}\to-1$ ($\Delta\lesssim33^\circ$,
$\gtrsim164^\circ$) and a negative plateau where
$E_{\rm QM}\to+1$ ($76^\circ\!-\!116^\circ$): the mechanism
\emph{softens the correlation extremes}, paying its tolerance where
perfection is demanded. This visibility-like signature is the mechanism's falsifiable
fingerprint: $\le1\%$ at the register angles, with the same envelope
continuing between the pins, where the floor-attaining 1024-offset,
161-angle solution shows sampled residuals up to $0.017$ (full range)
and $0.026$ (worst window); see Sec.~\ref{sec:mpl}.}
\label{fig:pattern}
\end{figure}

Finally, the mechanism composes with the detection sector of
Sec.~\ref{sec:mechanics}. Reweighting coincidences by
$D_\alpha=(1{-}\alpha)+\alpha D^{*}$ and re-optimizing $\rho$ under
rate-normalized constraints with flat coincidence rates enforced, the
combined system interpolates between the two closed channels: at
$\alpha=0$ (perfect detectors) the $16\times16$ source ensemble pays
$1.235\times$ the floor at $\varepsilon=0.02$ ($1.000$ at 1024
offsets, as above); at $\alpha=1$ the linear program returns
$\md=0$ at $\etaeff=0.797$, consistent with the detection-only
channel of Sec.~\ref{sec:mechanics} (certified ceiling
$\etaeff\approx0.801$; the residual half-percent gap is the
interpolation error of the sampled $D^{*}$ profile embedded in the
scan, whose mean efficiency is $0.7967$). Local accounts of the singlet thus occupy a
one-parameter dial between an informed source and an inefficient
detector, with every intermediate position priced by
Eq.~\eqref{eq:surface}.

\section{The fringe and the hiding theorem}\label{sec:fringe}
For the exact profile, the total coincidence rate obeys
$R_d(\Delta)\propto C_g(2\Delta)$, a $\cos4\Delta$-dominated fringe
(leading harmonic exceeding its first overtone $\approx6{:}1$) of relative
peak-to-peak amplitude $12.4\%$; a hybrid paying part of its bill through
measurement dependence shows a proportionally reduced amplitude,
quantified by the LP along the interpolating family (the ``guilt
trade-off''): a bound $\delta_{f}<2\%$ on the fringe forces
$\md\ge45\%$ of Hall's floor; $\delta_{f}<1\%$ forces $68\%$;
$\delta_{f}<0.5\%$ forces $80\%$ (family-dependent upper bounds on the
detection share; the qualitative flatness--$\md$ link is general by
Proposition~\ref{prop:imposs}).

\begin{theorem}[Hiding theorem]\label{thm:hide}
The fringe vanishes identically at the CHSH angles: for every harmonic
$k$, $\cos(2k\cdot45^\circ)=\cos(2k\cdot135^\circ)$, hence
$R_d(22.5^\circ)=R_d(67.5^\circ)$ exactly, term by term.
\end{theorem}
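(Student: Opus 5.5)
The plan is to write the total coincidence rate as a Fourier series in its argument $\theta = 2\Delta$ and check the claimed equality one harmonic at a time.

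\emph{Step 1: harmonic structure of $R_d$.} Start from the representation of Sec.~\ref{sec:fringe}, $R_d(\Delta)\propto C_g(2\Delta)$, where $C_g$ is the circular autocorrelation of $g(x)=D(|\cos x|)$. Since $|\cos x|$ has period $\pi$, so does $g$. Its Fourier expansion therefore contains only even harmonics, $g(x)=\sum_{j}\hat g_{2j}e^{2ijx}$. By the Wiener--Khinchin identity used in Sec.~\ref{sec:pinning}, $C_g(\theta)=\sum_j|\hat g_{2j}|^2e^{2ij\theta}$. Because $g$ is real and even, this becomes a cosine series,
\[
C_g(\theta)=\sum_{k\ge 0}c_k\cos(2k\theta),\qquad c_k\ge 0 .
\]
Substituting $\theta=2\Delta$ gives $R_d(\Delta)\propto\sum_k c_k\cos(4k\Delta)=\sum_k c_k\cos(2k\cdot 2\Delta)$. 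This is the indexing in the statement: harmonic $k$ enters as $\cos(2k\cdot 2\Delta)$.

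\emph{Step 2: the relevant separations.} At the CHSH settings $(0^\circ,45^\circ;22.5^\circ,67.5^\circ)$, the four separations $|a-b|$ are $22.5^\circ$ for three of the pairs and $67.5^\circ$ for $(a_1,b_2)$. The corresponding arguments $2\Delta$ are $45^\circ$ and $135^\circ$.

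\emph{Step 3: term-by-term equality.} The two arguments satisfy $135^\circ=180^\circ-45^\circ$. Hence $2k\cdot135^\circ=360^\circ k-2k\cdot45^\circ$, and by periodicity and evenness of cosine,
\[
\cos(2k\cdot135^\circ)=\cos(-2k\cdot45^\circ)=\cos(2k\cdot45^\circ)\quad\text{for every integer }k.
\]
So every term of the series takes the same value at $\Delta=22.5^\circ$ and at $\Delta=67.5^\circ$. Summing with the weights $c_k$ gives $R_d(22.5^\circ)=R_d(67.5^\circ)$. The rate is therefore identical on all four setting pairs, and the fringe is invisible to a CHSH run.

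\emph{Main obstacle.} The only delicate point is justifying the term-by-term summation: the cosine series must converge at these two points to the value of $R_d$. I would get this from the continuity of $D$, which is assumed in Lemma~\ref{lem:sqrt} and holds for the solved profile $D=\sqrt m\,h(m)$. Continuity makes $g$ continuous and square-integrable, so $\sum_k c_k=\sum_j|\hat g_{2j}|^2=\|g\|_2^2<\infty$. With nonnegative, summable coefficients, the series converges absolutely and uniformly to the continuous function $C_g$, and pointwise evaluation is legitimate. Finally, the proportionality constant between $R_d$ and $C_g(2\Delta)$ must not depend on the settings. This holds because it depends only on the uniform law of $\lambda$, which is itself setting-independent.
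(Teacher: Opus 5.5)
Your proof is correct and follows the paper's route. You expand $R_d\propto C_g(2\Delta)$ in the harmonics $\cos(2k\theta)$ and apply $\cos(2k\cdot135^\circ)=\cos(2k\cdot45^\circ)$ term by term, and your added justifications fill in what the paper leaves implicit: the $\pi$-periodicity of $g$ excludes the odd harmonics (which would flip sign), the series converges absolutely, and the normalization is setting-independent. The convergence worry can even be bypassed entirely, since evenness and $\pi$-periodicity of $C_g$ give $C_g(\pi-\theta)=C_g(\theta)$ directly.
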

\noindent Four-point CHSH experiments are therefore mathematically blind
to the observable; only full fringe scans see it.

\begin{figure}[t]\centering
\includegraphics[width=.68\linewidth]{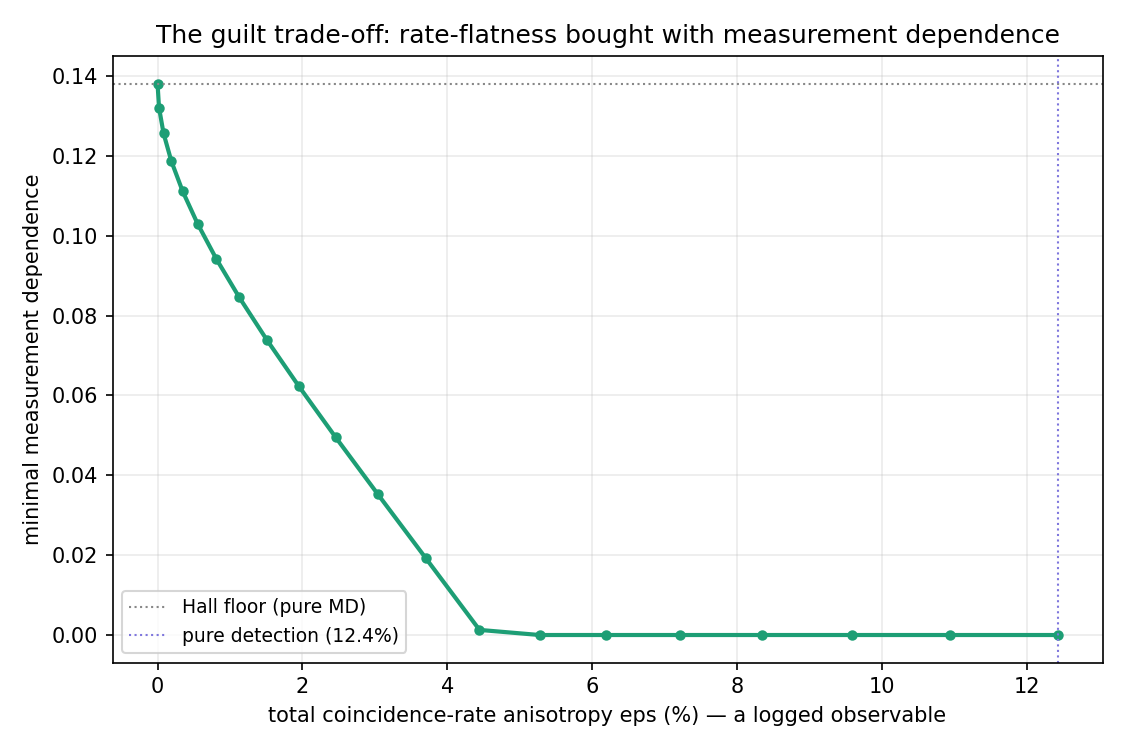}
\caption{The trade-off between the two channel signatures: minimal
measurement dependence versus fourfold-sum anisotropy along the
interpolating detection family.}\label{fig:guilt}\end{figure}
\section{The observational record}\label{sec:data}
The fourfold sum $R_{++}{+}R_{+-}{+}R_{-+}{+}R_{--}$ has long been
recognized as the natural fair-sampling diagnostic
\cite{AdenierKhrennikov2003,Adenier2008,AdenierAJP2008}. Its constancy, however, appears never to have
been bounded precisely: the landmark two-channel experiment reported the
sum only as ``typically'' $80\,\mathrm{s}^{-1}$ \cite{Aspect1982}, an
omission noted explicitly in later analysis \cite{LHV1998}; a reanalysis
of the Innsbruck dataset \cite{Weihs1998} reported anomalies in raw-rate
combinations \cite{AdenierKhrennikov2007}
(contested, with channel-efficiency asymmetries the mundane account). A
dedicated fair-sampling test was proposed as early as 2003
\cite{AdenierKhrennikov2003} and refined in 2008 \cite{Adenier2008},
but the percent-level setting-dependent anomalies reported in the raw
Innsbruck data were ultimately deflected rather than bounded: subsequent
analyses attributed them to pair-identification and post-selection
procedures \cite{Kupczynski2017}, explanations that could not be
experimentally excluded. The modern high-efficiency photonic tests
\cite{Christensen2013,Giustina2013,Giustina2015,Shalm2015} then closed
the fair-sampling loophole by adopting the Eberhard/CH form --- a
rigorous sidestep that leaves the observable unmeasured at the
precision relevant here. One near-miss deserves preemptive distinction: Brida, Genovese, and
Piacentini scanned coincidence rates against analyzer angle to falsify
a class of local models \cite{Santos2004,Brida2007} --- but in a
single-channel Glan--Thomson configuration, where the coincidence rate
$R_{12}(\varphi)\propto\cos^{2}\varphi$ itself carries the
correlation and is \emph{expected} to modulate; it is not the
two-channel fourfold sum and bounds nothing about the observable
considered here. I could locate no dedicated sub-percent bound
on the angular modulation of the fourfold coincidence sum. The protocol
of Sec.~\ref{sec:protocol} is designed to supply exactly the
factorization veto whose absence left the earlier lineage
inconclusive.

\section{A factorization-vetoed protocol}\label{sec:protocol}
Scan the full settings torus $(a,b)$ (e.g.\ $24\times24$ points) and
Fourier-analyze $\log R_d(a,b)$. Any local apparatus response factorizes,
$\eta_A(a)\eta_B(b)$, and after the logarithm is additive: it occupies
only the Fourier axes, at any amplitude, harmonic content, or drift
phase. The physical signal $\cos4(a-b)$ is irreducibly joint and occupies
the anti-diagonal pixels $(k,-k)$. In simulation with $3\%$ apparatus
ripple and Poisson noise at $10^6$ counts per point, a null input is
recovered as $(0.001\pm0.015)\%$ and the estimator is exactly linear.
Poisson-limited $5\sigma$ sensitivity at $\delta_{f}=0.1\%$ requires
$\sim7$ hours at $5\times10^4$ pairs/s, or $\sim20$ minutes at
$10^6$ pairs/s. Controls: randomized visiting order and heralded-singles
normalization against source drift; interleaved reference points against
slow systematics; linear-regime detector rates.

\textbf{Kill conditions.} Flatness at $0.1\%$: the detection share is
eliminated and, through Eq.~\eqref{eq:surface} and the trade-off, any
local account must carry $\md\gtrsim95\%$ of Hall's floor --- the
measurement-dependence program \cite{Hall2010,Hall2011,PriceOfLocality}
becomes the governing framework for local realism. A fringe of the
predicted shape: a flat-rate prediction of quantum mechanics fails.
A fringe of the wrong shape: this detection sector is excluded as
constructed.

\begin{figure}[t]\centering
\includegraphics[width=.62\linewidth]{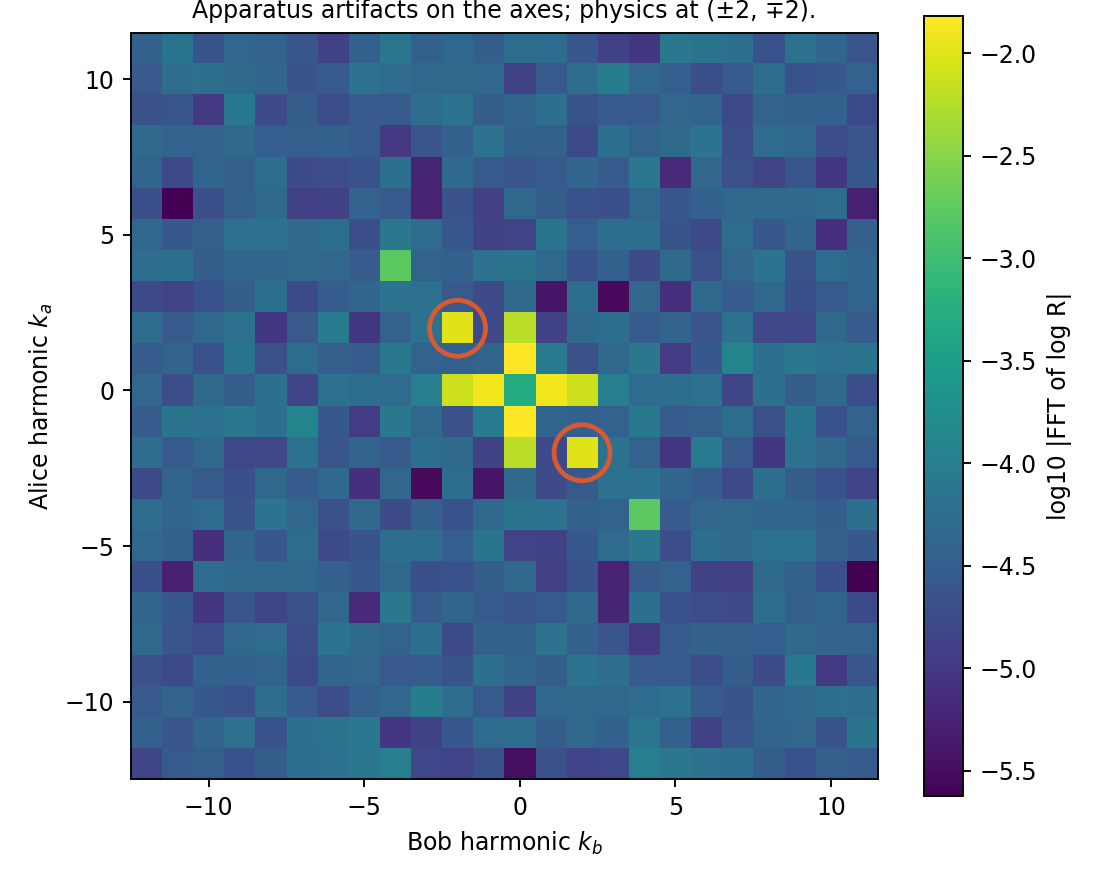}
\caption{Simulated protocol analysis: log-Fourier plane of the rate map
over the settings torus. Factorizable apparatus response occupies the
axes; the physical signal occupies the circled anti-diagonal pixels.}
\label{fig:proto}\end{figure}
\section{Discussion and limitations}
The loophole-free experiments \cite{Hensen2015,Giustina2015,Shalm2015}
exclude local models conditional on measurement independence; the cosmic
and human-choice tests \cite{Handsteiner2017,BigBell2018} constrain the
provenance of any MI violation, not its existence. The surface
\eqref{eq:surface} prices what survives. Limitations: the guilt
trade-off is computed along a specific detection family (an upper bound
on the true detection share at given $\delta_{f}$); the surface is
certified at one point and machine-verified elsewhere, pending a
parametric dual; the quenched ensemble is sampled, the
prices converge to the floor only at sufficient offset coverage (1024
atoms at the registers used, up to 161 pinned angles), and off-register
fidelity is sampled, not bounded; $\rho$ is optimizer-chosen and must be
nonuniform over offsets --- a uniform offset density yields zero
measurement dependence by translation symmetry --- and its derivation
from a formation process remains open; the smooth factor $h$ lacks a closed form; and the
protocol's systematics beyond those simulated require the usual
experimental care. The detection structure investigated here was
originally suggested by a beat-pattern (moir\'e) heuristic; none of the
results depend on it.

\section*{Code and data availability}
All linear programs, exact certificates, spectral solvers, and protocol
simulations are provided in the repository directory\\
\url{https://github.com/RandomInternetPreson/moire-phase-space-sampler/tree/main/DST_Bell_MI/Exact_Price_Source_Material},\\
with every numerical claim reproducible from a single script per
section. For Sec.~\ref{sec:mpl} the pipeline is
deterministic end to end (seedless $T=0$ dynamics --- offset draws in
\texttt{dither\_sweep.py} use recorded seeds; independent
re-executions reproduce every price bit-for-bit) and runs on
commodity hardware:
\texttt{breed\_t0.py} selects the interaction weights under the
faithfulness objective; \texttt{flight\_time.py} (with
\texttt{NANG=41} and the tolerance sweep \texttt{EPS}) produces the
uniform-register price table; \texttt{extract\_pattern.py} re-solves
at $\varepsilon=0.01$ and emits the fingerprint of
Fig.~\ref{fig:pattern} (\texttt{pattern.json}); and
\texttt{orchestra.py} performs the detection-composition scan.
\texttt{slice\_solve2.py} re-solves the source register at denser
melodies (81, 161 angles) with an optional second constrained slice;
\texttt{slice\_check\_mp.py} evaluates any saved solution on arbitrary
slices and angle windows (emitting \texttt{slice\_check\_a0\_*.json});
\texttt{dither\_sweep\_rho.py} extends \texttt{dither\_sweep.py} with
trajectory-bank checkpointing and solution export, producing the
1024-offset, 161-angle floor solution
(\texttt{rho\_nsh32\_random0\_eps0.01\_nang161.json}).
\texttt{dither\_sweep.py} performs the offset-ensemble robustness
battery (lattice, shifted lattice, stratified, random; 256 and 1024
atoms), emitting \texttt{dither\_stage1/2/3.json}. Total wall time on a 56-core workstation is under two hours for the
256-atom pipeline; each 1024-atom run takes about one hour. The
$\varepsilon=0.01$ price is cross-validated by two independently
coded solvers on the same register (\texttt{extract\_pattern.py}
and \texttt{flight\_time.py} with its constraint grid set to
$[2^\circ,178^\circ]$), both returning $1.375$ to three decimals,
and the $\varepsilon=0.02$ price is independently recovered by
\texttt{orchestra.py} at $\alpha=0$ ($\md=0.17048$ in both).

\section*{AI-assisted preparation}
Large-language-model assistance (Claude, Anthropic) was used during
this research as a tool for checking derivations, drafting text, and
writing verification code, under the sole direction of the author, who
takes full responsibility for the entire content. All exact values and
identities were verified independently of the tool by machine
computation as described above, and every numerical claim is
reproducible from the released scripts.

\end{document}